\newcommand{\comment}[1]{}
\newtheorem{theorem}{Theorem}
\newtheorem{definition}{Definition}
\newtheorem{lemma}[theorem]{Lemma}
\newtheorem{corollary}{Corollary}
\def\squareforqed{\hbox{\rlap{$\sqcap$}$\sqcup$}}
\newtheorem{example}{Example}
\newcommand{\qed}{\vrule height6pt width4pt\medskip}
\newcommand{\QED}{\hfill\qed}
\def\squareforqed{\hbox{\rlap{$\sqcap$}$\sqcup$}}
\def\qed{\ifmmode\squareforqed\else{\unskip\nobreak\hfil
\def\mod{\ {\rm mod}\ }
\penalty50\hskip1em\null\nobreak\hfil\squareforqed
\parfillskip=0pt\finalhyphendemerits=0\endgraf}\fi}
\newcommand{\F}{{\Bbb F}}
\newcommand{\N}{{\mathbb N}}
\begin{document}
\title{Improved Deterministic Length Reduction}

\author{
\begin{tabular}{ccc}
Amihood Amir\thanks{ Department of Computer Science, Bar-Ilan
University, Ramat-Gan 52900, Israel, +972 3 531-8770; {\tt
amir@cs.biu.ac.il}; and Department of Computer Science, Johns Hopkins
University, Baltimore, MD 21218. Partly supported by ISF grant 35/05.}
&
Klim Efremenko \thanks{Department of Computer Science,
Bar-Ilan U., 52900 Ramat-Gan, Israel, (972-3)531-8408, {\tt
klimefrem@gmail.com}.} &
Oren Kapah\thanks{Department of Computer Science,
Bar-Ilan U., 52900 Ramat-Gan, Israel, (972-3)531-8408, {\tt
kapaho@cs.biu.ac.il}.}
\\
{\small Bar-Ilan University} & {\small Bar-Ilan University} & {\small
  Bar-Ilan University}\\
{\small and} \\
{\small Johns Hopkins University} \\
\\
Ely Porat\thanks{ Department of Computer Science, Bar-Ilan
University, 52900 Ramat-Gan, Israel, (972-3)531-7620; {\tt
porately@cs.biu.ac.il}.} &&
Amir Rothschild\thanks{ Department of
Computer Science, Bar-Ilan University, 52900 Ramat-Gan, Israel,
(972-3)531-7620; {\tt amirrot@gmail.com}.}
\\
{\small Bar-Ilan University} && {\small Bar-Ilan University}
\end{tabular}
}

\date{}

\maketitle
\begin{abstract}
This paper presents a new technique for deterministic length reduction.
This technique improves the running time of the algorithm presented
in ~\cite{LR07} for performing fast convolution in sparse data.
While the regular fast convolution of vectors $V_1,V_2$ whose sizes
are $N_1,N_2$ respectively, takes $O(N_1 \log N_2)$ using FFT, using
the new technique for length reduction, the algorithm proposed in
~\cite{LR07} performs the convolution in $O(n_1 \log^3 n_1)$, where
$n_1$ is the number of non-zero values in $V_1$. The algorithm
assumes that $V_1$ is given in advance, and $V_2$ is given in
running time. The novel technique presented in this paper improves the
convolution time to $O(n_1 \log^2 n_1)$ {\sl deterministically}, which
equals the best running time given achieved by a {\sl randomized}
algorithm.

The preprocessing time of the new technique remains the same as the
preprocessing time of~\cite{LR07}, which is $O(n_1^2)$. This assumes
and deals the case where $N_1$ is polynomial in
$n_1$. In the case where $N_1$ is exponential in $n_1$, a
reduction to a polynomial case can be used. In this paper we also
improve the preprocessing time of this reduction from
$O(n_1^4)$ to $O(n_1^3{\rm polylog}(n_1))$.
\end{abstract}

\setlength{\parindent}{0.0in}
\setlength{\parskip}{0.1 in}

\section{Introduction\label{sec:intro}}

The {\em  d-Dimensional point set matching problem} serves as powerful
tools in numerous application domains.
In the d-Dimensional point set matching problem, two sets of points
$T,P \in \N^d$ consisting of $n,m$ points, respectively, are given.
The goal is to determine if there is a rigid transformation under
which all the points in $P$ are covered with points in $T$. Among the
important application domains to which this problem contributes are
model based object recognition, image registration, pharmacophore
identification, and searching in music archives. For an explanation of
the uses of the point-set matching problem in these domains
see~\cite{LR07}.

The point-set matching problem has been studied in the literature in
many variation,  not the least of which in the
algorithms literature. In ~\cite{sc:98Sch} Cardoze and Schulman used a
randomized algorithm to reduce the space size of $T,P$ and then
apply solve the problem in the reduced
space. In ~\cite{CH:02} Cole and Hariharan proposed a solution to
the {\it d-Dimensional Sparse Wildcard Matching}. This is a
generalization of the d-Dimensional point set matching problem where
every point in $\N^d$ is associated with a value. A match is declared
if the values of coinciding points are equal. The Cole and Hariharan
solution consists of two steps. The first step is a {\it Dimension
Reduction} where the inputs $T,P$ are linearized into raw vectors
$T',P'$ of size polynomial in the number of non-zero values. The
second step was a {\it Length Reduction} where each of the raw vectors
$T',P'$ was replaced by $\log n$ short vectors of size $O(n)$ where
$n$ is the number of non-zeros. The idea is that the mapping to the
short vectors preserves the distances in the original vectors, thus
the problem is reduced to a matching problem of short vectors, to
which efficient solutions exist. The problem with the {\em length
reduction} idea is that more then one point can be mapped into the
same location, thus it is no longer clear whether there is indeed a
match in the original vectors. The proposed solution of Cole and
Hariharan was to create a set of $\log n$ pairs of vectors using
$\log n$ hash function rather then a single pair of vectors. Their
scheme reduced the failure probability.

In~\cite{LR07}, the first {\bf deterministic} algorithm for
finding $\log n$ hash functions that reduce the size of the vectors to
$O(n \log n)$ was presented. The algorithm guaranteed that each
non-zero value appears with no collisions in {\em at least} one of the
vectors, thus eliminating the possibility of en error.
The {\em length reduction} idea was used to solve the {\it Sparse
Convolution} problem posed in ~\cite{muthu-open}, where the aim is to
find the convolution vector $W$ of two vectors $V_1,V_2$ whose sizes
are $N_1,N_2$, with $n_1,n_2$ non-zero elements respectively (where
$n_1>n_2)$. It is assumed that the two vectors are not given
explicitly, rather they are given as a set of $(index,value)$
pairs. Using the Fast Fourier Transform (FFT) algorithm, the
convolution can be calculated in running time $O(N_1 \log
N_2)$\cite{CLR-92}. In our context, though, the vectors $V_1, V_2$ are
sparse. The aim of the algorithm is to compute $W$ in time
proportional to the number of non-zero entries in $W$, which may be
significantly smaller than $O(N_1)$. Clearly, this can be easily done
in time $O(n_1n_2)$.

The goal of the length reduction is as follows: Given two vectors
$V_1,V_2$ whose sizes are $N_1,N_2$, with $n_1,n_2$ non-zero
elements respectively (where $n_1>n_2)$, obtain two vectors
$V_1',V_2'$ of size $O(n_1)$ such that all the non-zero in $V_1$ and
in $v_2$ will appear as singletons in $V_1'$ and in $V_2'$
respectively while maintaining the distance property.

The distance property which need to be maintained is defined as
follows: If $V_2'[f(0)]$ is aligned with $V_1'[f(i)]$, then
$V_2'[f(j)]$ will be aligned with $V_1'[f(i+j)]$.

This goal was not reached yet, rather a set of $O(\log n_1)$ vectors
of size $O(n_1 \log n_1)$ where obtained in ~\cite{LR07}, where each
non-zero in the text appears at least once as a singleton in the set
of vectors. This length reduction gave an $O(n_1 \log^3 n_1)$
algorithm for convolution in sparse data. In this paper we go one
step forward and reduce the size of the obtained vectors to
$O(n_1)$. This length reduction technique improves the running time
of the fast convolution presented in ~\cite{LR07} to $O(n_1 \log^2
n_1)$, which is the running time for the randomized algorithm
presented in ~\cite{CH:02}.

\section{Preliminaries and Notations}\label{s:pre}

Throughout this paper, a capital letter (usually $N$) is used to
denote the size of the vector, which is equivalent to the largest
index of a non-zero value, and a small letter (usually $n$) is used
to denote the number of non-zero values. It is assumed that the
vectors are not given explicitly, rather they are given as a set of
$(index,value)$ pairs, for all the non-zero values.

A convolution uses two initial functions, $v_1$ and $v_2$, to
produce a third function $w$. We formally define a discrete
convolution.
\begin{definition}
Let $V_1$ be a function whose domain is $\{ 0,..., N_1-1\}$ and $V_2$
a function whose domain is $\{ 0,..., N_2-1 \}$. We may view $V_1$ and
$V_2$ as arrays of numbers, whose lengths are $N_1$ and $N_2$,
respectively. The {\em discrete convolution of $V_1$ and $V_2$} is the
polynomial multiplication
$$ W[j] = \sum_{i=0}^{N_2-1} V_1[j+i] V_2[i].$$
\end{definition}

In the general case, the convolution can be computed by using the Fast
Fourier Transform (FFT)~\cite{CLR-92}. This can be done in time
$O(N_1\log N_2)$, in a computational model with word size $O(\log
N_2)$. In the sparse case, many values of $V_1$ and $V_2$ are
$0$. Thus, they do not contribute to the convolution value. In our
convention, the number of non-zero values of $V_1 (V_2)$ is $n_1
(n_2)$. Clearly, we can compute the convolution in time
$O(n_1n_2)$. The question posed by Muthukrishnan~\cite{muthu-open} is
whether the convolution can be computed in time $o(n_1n_2)$.

Cole and Hariharan's suggestion was to use {\em length reduction}.
Suppose we can map all the non-zero values into a smaller vector, say
of size $O(n_1 \log n_1)$. Suppose also that this mapping is alignment
preserving in the sense that applying the same transformation on $V_2$
will guarantee that the alignments are preserved. Then we can simply
map the the vectors $V_1$ and $V_2$ into the smaller vectors and then
use FFT for the convolutions on the smaller vectors, achieving time
$O(n_1 \log^2 n_1)$.

The problem is that to-date there is no known mapping with that
alignment preserving property. Cole and Hariharan~\cite{CH:02}
suggested a randomized idea that answers the problem with high
probability. The reason their algorithm is not deterministic is the
following:
In their length reduction phase, several indices of
non-zero values in the original vector may be mapped into the same
index in the reduced size vector. If the index of only one non-zero
value is mapped into an index in the reduced size vector, then this
index is denoted as {\it singleton} and the non-zero value is said
to appear as a {\it singleton}. If more then one non-zero value is
mapped into the same index in the reduced size vector, then this
index is denoted as {\it multiple}. The multiple case is problematic
since we can not be sure of the right alignment. Fortunately, Cole
and Hariharan showed a method whereby in $O(\log n_1)$ tries, the
probability that some index will {\em always} be in a multiple
situation is small. In~\cite{LR07}, a deterministic solution to the
multiple problem was presented. That solution utilized number
theoretic ideas. The new idea of this paper is to improve the
reduction size by using {\em polynomials} to represent the location of
the non-$0$ elements of the given vectors.

\section{The New Length Reduction Technique for the Polynomial
  Case}\label{s:length}

The proposed technique deals with the case that $N_1$ is polynomial
in $n_1$, thus the indices are bounded by $n_1^c$. In the case
where, $N_1$ is exponential in $n_1$, the reduction to a polynomial
case can be used.

The main idea of the algorithm is to derive a set of unique polynomials
from each non-zero index in $V_1$, and one polynomial for each non-zero
in $V_2$. Each assignment for the polynomials in $\F_q$, where $q$ is a
prime number of size $\Theta (n_1)$ will give a different mapping of
the non-zeros in $V_1$ and in $V_2$ to vectors of size $q$. The
convolution will be performed between the vectors obtained from
$V_1$ and $V_2$ under the same assignments.

The first step of the algorithm is to choose a prime number of size
$\Theta (n_1)$, and create a polynomial for each non-zero index in
$V_1$. The created polynomial of index $i$ will be denoted as the base
polynomial of $T[i]$. The creation of the polynomial is done by
representing the index as a number in base $(q-1) \over 2$. Each
digit is interpreted as a coefficient of the polynomial. For example:
If $q=13$, then index $95$ in base $10$ is $235$ in base ${(13-1)
\over 2}=6$ which is represented by the polynomial $2X^2+3X+5$.

Since the indices in $V_1$ are bounded by $n_1^c$, and $q$ is $\Theta
(n_1)$, then the degree of the polynomials which created in this step
is bounded by $c$. In the next step, from each polynomial we create
$2^c$ polynomials. This is done by giving to choices for each
coefficient of the polynomial: (1) Leave it as is. (2) Add $(q-1) \over
2$ to the coefficient and decrease by 1 the coefficient of the
higher degree. We do this for all the coefficients of the polynomial
except for the coefficient of the highest degree.

\begin{example}
Suppose we have a non-zero index $95$, using $q=13$ we get the base
polynomial $2X^2+3X+5$. After the second step we will obtain $4$
polynomials: $2X^2+3X+5$, $2X^2+2X+11$,$X^2+9X+5$,$X^2+8X+11$.\\
The first polynomial is the base polynomial. The second polynomial was
obtained by adding $6$ to the first coefficient and decreasing the
second coefficient by one. The 3rd and the 4th polynomials were created
by adding $6$ to the second coefficient of the first and second
polynomials respectively, and decreasing the third coefficient by one.
\end{example}

The duplication of the polynomials was made to meet the distance
preserving requirement from the length reduction specified in the
following Lemma:

\begin{lemma}\label{l:PolAllignment}
For any assignment of $X$, if $V_2[0]$ is aligned with the base
polynomial representing $V_1[i]$, then $V_2[j]$ will be aligned with
one of the polynomials representing $V_1[i+j]$.
\end{lemma}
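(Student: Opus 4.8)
The plan is to first make precise what ``aligned'' means in the reduced space. Each non-zero index $j$ of $V_2$ is sent to the single location $Q_j(X)\bmod q$, where $Q_j$ is the base polynomial of $j$, and each index $i$ of $V_1$ is sent to $P(X)\bmod q$ for $P$ one of its $2^c$ associated polynomials; a relative shift of the two length-$q$ vectors is a single additive constant $s\in\F_q$, and ``$V_2[\ell]$ is aligned with a location $L$ of $V_1$'' means $Q_\ell(X)+s\equiv L\pmod q$. Since $Q_0$ is the zero polynomial, the hypothesis ``$V_2[0]$ is aligned with the base polynomial of $V_1[i]$'' forces $s\equiv P_i(X)\pmod q$, where $P_i$ is the base polynomial of $i$. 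Thus the conclusion reduces to the single assertion
$$P_i(X)+Q_j(X)\equiv P(X)\pmod q$$
for some polynomial $P$ among the $2^c$ polynomials representing $V_1[i+j]$, and this must hold for every assignment of $X$.

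The heart of the proof is that this holds not merely modulo $q$ but as a formal identity: I claim the coefficient-wise sum $P_i+Q_j$ is \emph{exactly} one of the $2^c$ polynomials generated from the base polynomial of $i+j$. To see this, write $m=(q-1)/2$ for the base and let $a_k,c_k,b_k$ be the base-$m$ digits of $i$, $j$, and $i+j$. Ordinary addition with carries gives, for each position $k$, a carry bit $e_k\in\{0,1\}$ (with $e_{-1}=0$) satisfying
$$a_k+c_k+e_{k-1}=b_k+m\,e_k,\qquad\mbox{i.e.}\qquad a_k+c_k=b_k+m\,e_k-e_{k-1}.$$
Now recall how the $2^c$ polynomials are built: choosing a subset $S$ of coefficient positions and applying to each $k\in S$ the operation ``add $m$ to coefficient $k$ and subtract $1$ from coefficient $k+1$'' turns the base polynomial of $i+j$ into the polynomial whose $k$-th coefficient is $b_k+m\,[k\in S]-[k-1\in S]$. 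Taking $S=\{k:e_k=1\}$ makes this coefficient equal to $b_k+m\,e_k-e_{k-1}=a_k+c_k$, which is precisely the $k$-th coefficient of $P_i+Q_j$. Hence $P_i+Q_j$ is the polynomial obtained for this $S$, proving the claim; evaluating at any $X$ and reducing mod $q$ then yields the displayed congruence for every $X$.

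The remaining points are bookkeeping, and the carry analysis is where the care is needed. I must check that $S$ is an \emph{admissible} choice, i.e. that it never invokes the operation on the top coefficient: this holds because $i+j$ is a genuine index of $V_1$ and therefore fits in the same $c$ base-$m$ digits, so the top carry $e_c$ vanishes and $S\subseteq\{0,\dots,c-1\}$. One should also note that all coefficients stay in the legal range $[0,q)$ (each $a_k+c_k\le 2m-2=q-3$, and wherever a borrow occurs without a carry out the digit $b_k$ is at least $1$, so no coefficient becomes negative), which is what makes the evaluate-then-reduce step well defined. I expect the main obstacle to be exactly this matching between the combinatorial ``subset of operations'' description of the $2^c$ polynomials and the arithmetic carry sequence $e_k$; once they are identified as above, the lemma follows uniformly in $X$ with no further computation.
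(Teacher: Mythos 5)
Your proposal is correct and follows essentially the same route as the paper's own proof: both reduce the alignment claim to showing that the coefficient-wise sum of the two base polynomials equals one of the $2^c$ duplicated polynomials for index $i+j$, with the ``add $(q-1)/2$, borrow one from the next coefficient'' operations matching exactly the carries of base-$\frac{q-1}{2}$ addition. Your write-up is in fact tighter than the paper's (which argues the carry cases informally, coefficient by coefficient), since you make the carry bits $e_k$, the subset $S$, and the vanishing top carry explicit.
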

\begin{proof}
Let $q$ be the chosen prime number. Index $0$ in $V_2$ is
represented by the polynomial $0$, and index $j$ in $V_2$ is
represented by the a polynomial $A=a_cX^c+a_{c-1}X^{c-1}+...+a_0$.
Index $i$ in $V_1$ is represented by a polynomial of the form
$B=b_cX^c+b_{c-1}X^{c-1}+...+b_0$, and index $i+j$ in $V_1$ is
represented by a polynomial $D=d_cX^c+d_{c-1}X^{c-1}+...+d_0$. Note
that the coefficients $a_i$ and $b_i$ are smaller then $(q-1) \over
2$.

Clearly, if $V_2[0]$ is aligned with $V_[i]$, then for any
assignment of $X$, $V_2[j]$ will be aligned with the polynomial
$A+B=(a_c+b_c)X^c+(a_{c-1}+b_{c-1})X^{c-1}+...+(a_0+b_0)$. Now lets
look at the first coefficient of $D$, since $a_0$ and $b_0$ are
smaller then $(q-1) \over 2$, then there are only two cases: (1)
$(a_0+b_0)<{(q-1) \over 2}$, thus $d_0=a_0+b_0$. (2)
$(a_0+b_0)>={(q-1) \over 2}$, thus $d_0=a_0+b_0-{(q-1) \over 2}$
which is covered by the
polynomial where $(q-1) \over 2$ was added to the first coefficient.\\
In the later case, one was added to the second coefficient, thus we
decrease the next coefficient whenever we add $(q-1) \over 2$ to the
current coefficient. The same cases exist also in all the
coefficient, but a polynomial was created for each possible case ($2^c$
cases), thus one of the created polynomials will be equal to the
polynomial $A+B$. $\QED$
\end{proof}

Note that all the $2^c \times n_1$ created polynomials are unique, and
in $\F_q$. Assigning a value to the polynomials in $\F_q$ will give a
vector of size $q$.

\begin{lemma}\label{l:maxRoots}
Any two polynomials can be mapped to the same location in at most
$c$ assignments.
\end{lemma}
\begin{proof}
The distance between any two polynomials gives a polynomial, where the
degree of the difference polynomial is bounded by $c$. Since both
polynomials give the same index under the selected assignment, then the
assigned value is a root of the difference polynomial. The degree of
this polynomial is bounded by $c$, thus it can have at most $c$
different roots in $\F_q$. $\QED$
\end{proof}

Since any polynomial can be mapped into the same location with at most
$2^c \times n_1-1$ other polynomials, and with each of them at most $c$
times, due to Lemma \ref{l:maxRoots}, then we get the following
Corollary:
\begin{corollary}\label{c:maxPolMultiples}
Any polynomial can appear as a {\it multiple} in not more then $c
\times 2^c \times n_1$ vectors.
\end{corollary}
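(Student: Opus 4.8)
The plan is to prove Corollary~\ref{c:maxPolMultiples} by a simple union bound over all potential pairwise collisions, where each assignment of a value $x\in\F_q$ to the variable $X$ is viewed as one of the vectors, and ``$P$ is a \emph{multiple} in that vector'' is the event I want to bound.

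First I would fix one of the $2^c\times n_1$ constructed polynomials, call it $P$, and pin down the event precisely. An assignment is a choice $x\in\F_q$, and the associated vector places $P$ at location $P(x)$. By the definition of \emph{multiple}, $P$ is a multiple in this vector exactly when at least one other polynomial $Q\neq P$ among the $2^c\times n_1$ polynomials lands at the same location, i.e.\ $P(x)=Q(x)$. Hence the set of ``bad'' assignments for $P$ is contained in $\bigcup_{Q\neq P} S_Q$, where $S_Q=\{\,x\in\F_q : P(x)=Q(x)\,\}$ is the set of assignments on which the pair $(P,Q)$ collides.

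Next I would bound each $S_Q$ and count how many such sets there are. Since the $2^c\times n_1$ polynomials are all distinct (as recorded right after the construction), the difference $P-Q$ is a nonzero polynomial of degree at most $c$ for every $Q\neq P$, and $x\in S_Q$ if and only if $x$ is a root of $P-Q$. This is precisely the content of Lemma~\ref{l:maxRoots}, giving $|S_Q|\le c$. There are at most $2^c\times n_1-1$ choices of $Q\neq P$. Combining these two facts by the union bound yields
\[
\Big|\bigcup_{Q\neq P} S_Q\Big|\;\le\;\sum_{Q\neq P}|S_Q|\;\le\;(2^c\times n_1-1)\,c\;\le\;c\times 2^c\times n_1,
\]
which is exactly the claimed bound on the number of vectors in which $P$ appears as a multiple.

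I do not expect a genuine obstacle here, since the statement is a direct counting consequence of Lemma~\ref{l:maxRoots}. The only point that needs care is justifying that $P-Q\not\equiv 0$, so that the degree-$c$ root bound actually applies; this is precisely why the uniqueness of the $2^c\times n_1$ constructed polynomials was recorded, and it is what lets me invoke Lemma~\ref{l:maxRoots} for every pair. A minor secondary subtlety is that the estimate is deliberately loose: counting collisions between the $2^c$ polynomials sharing one original index, and using $2^c n_1$ in place of $2^c n_1-1$, only inflates the count, so the crude union bound is safely an overestimate of the true number of bad assignments.
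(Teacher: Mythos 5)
Your proposal is correct and follows essentially the same route as the paper: the paper derives the corollary in the sentence preceding it by noting that a fixed polynomial can collide with at most $2^c \times n_1 - 1$ other polynomials, each in at most $c$ assignments by Lemma~\ref{l:maxRoots}, which is exactly your union bound. Your write-up merely makes explicit the definition of the bad-assignment sets and the role of the polynomials' distinctness, which the paper leaves implicit.
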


The last step of the length reduction algorithm is to find a set of
$O(\log n_1)$ assignments which will ensure that each polynomial will
appear as a singleton at least once.

The selection of the $O(\log n_1)$ assignments is done as follows:
Construct table $A$ with $2^c \times n_1$ columns and $c \times
2^{c+1} \times n_1$ rows. Row $i$ correspond to an assigned value
$a_i$ and the corresponding reduced length vector $V_{1,i}$. A
column corresponds to a polynomial $P_j$. The value of $A_{ij}$ is set
to $1$ if polynomial $j$ appears as a {\it singleton} in vector
$V_{1,i}$. Due to Corollary \ref{c:maxPolMultiples}, the number of
zeros in each column can not exceed $c \times 2^c \times n_1$. Thus,
in each column there are $1$'s in at least half of the rows, which
means that the table is at least half full. Since the table is at
least half full there exists a row in which there is one in at least
half of the columns. The assignment value which generated this row
is chosen, and all the columns where there was a $1$ in the selected
row are deleted from the table.

Recursively another assignment value is chosen and the table size is
halved again, until all the columns are deleted. Since at each step
at least half of the columns are deleted, the number of prime number
chosen can not exceed  $\log (2^c \times n_1) = c \log n_1$.

{\bf Time:} Creating vector $V_{1,i}$ (row $i$) takes $O(n_1)$ time.
Since we start with a full matrix of $O(n_1)$ rows then the
initialization takes $O(n_1^2)$ time. Choosing the $O(\log n_1)$
assignment values is done recursively. The recurrence is:
$$ t(n_1^2) = n_1^2 + t({n_1^2\over 2})$$
The closed form of this recurrence is $O(n_1^2)$.

\section{The New Algorithm for The Exponential Case}\label{s:Exp}

In this case, as proposed in ~\cite{LR07}, each of the vectors $V_1$
and $V_2$ is reduced into a single vector of size $O(n_1^4)$, where
all the non-zeros appear as singletons. The reduction is preformed
using the modulus function with a prime number $q$ of size
$O(n_1^4)$. It was already proven there that there are at most
$n_1^3$ prime number of size $O(n_1^4)$, which generate at least one
multiple. Thus, by testing $n_1^3+1$ prime numbers we ensure that at
least one of them produce a vector with no multiples.

In order to find such a prime number, we find $n_1^3+1$ of size
$O(n_1^4)$. Then we multiply all the prime numbers to receive a
large number $Q$. In addition we have at most $n_1^2$ different
distances between any two non-zeros. We multiply all of them to
receive the large number $D$. The next step is to find the greatest
common divider ($GCD$) between $Q$ and $D$. Since there is at least
one prime number in $Q$ which does not divide $D$, then $GCD(Q,D)$
is less then $Q$. Dividing $Q$ by the $GCD(Q,D)$ will give $P$ which
is the multiplication of all the prime numbers that create only
singletons. The last step is to find at least one of them. This is
done using a binary search on the prime numbers. We take the
multiplication of half of the prime numbers $Q'$, and find the
$GCD(Q',P)$. If $GCD(Q',P) > 1$ we continue with this set of prime
numbers and multiply half of them iteratively. Otherwise, we
continue with the other half of the prime numbers. After $O(\log
n_1)$ iterations we will find one prime number which will generate
only singletons.

The algorithm appears in detail below.

\fbox{
\begin{minipage}{16cm}
{\bf Algorithm  -- $N_1$ is exponential in $n_1$} {\sf
\begin{enumerate}
    \item Find $n_1^3+1$ prime numbers of size $O(n_1^4)$.
    \item Multiply all the prime numbers to obtain $Q$.
    \item Multiply all the difference between any two non-zero indices to obtain $D$.
    \item Set $P={Q \over GCD(Q,D)}$.
    \item Let $S$ be the set of all prime numbers.
    \item While the size of $S$ is larger then $1$ do:
\begin{enumerate}
    \item Let $S'$ be a set of the first half of prime numbers in $S$.
    \item Set $Q'$ to be the multiplication of all the prime numbers in $S'$.
    \item If $GCD(Q',P)>1$ then set $S=S'$, otherwise set $S=S/S'$.
\end{enumerate}
\end{enumerate}
{\bf end Algorithm} }
\end{minipage}
}

{\bf Correctness:} Immediately follows from the discussion.

{\bf Time:} Step 1 is performed in time $O(n_1^3 {\rm polylog}(n_1))$
using the primality testing described in ~\cite{berri:02}. Step 2 is
done by building a binary tree of multiplication where each node
contain the multiplication of the two number in the lower level. This
tree has $O(\log n_1)$ levels. In the leaves there are $n_1^3$ prime
numbers with $\log n_1$ bits, so the total number of bits in each
level is $O(n_1^3 \log n_1)$. A multiplication of two numbers can be
computed in time $O(b \log b \log \log b)$ ~\cite{SS-71}, where $b$ is
the number of bits. Thus each level can be computed in time $O(n_1^3
{\rm polylog} (n_1))$ and the total time for step 2 is $O(n_1^3
{\rm polylog}(n_1))$. step 3 is preformed in the same way, but this time
in the leaves there are $n_1^2$ numbers with $n_1$ bits, thus each
level has $n_1^3$ bits and the time for this step is $O(n_1^3 \log
n_1)$. In step 4 we calculate the $GCD$ of two numbers with $O(n_1^3
\log n_1)$ bits. This can be calculated in time $O(n_1^3
{\rm polylog}(n_1))$ using ~\cite{SZ:02}. The calculation for step 6(b) was
already performed in step 2, and step 6(c) can be calculated in time
$O(n^3 {\rm polylog}(n_1))$, thus the time of step 6 is $O(n_1^3
{\rm polylog}(n_1))$. Following this discussion the total time of this
algorithm is $O(n_1^3 {\rm polylog}(n_1))$.

\section{Conclusion and Open Problems}\label{s:conc}

Improved deterministic algorithms for Length Reduction and Sparse
Convolution where presented in this paper. These can be used as
tools to provide faster algorithms for several well known problems.
The deterministic time achieved for convolving input patterns with a
fixed text is the same as the best known randomized algorithm.

An important problem remains: Can the Length Reduction and Sparse
Convolution problems be solved in real time without the need of the
preprocessing step, or alternately, can the preprocessing time be
reduced from quadratic?

\bibliographystyle{plain}
\small{
%\bibliography{/shared/Tex/paper}
\bibliography{PM}
}

\end{document}